\newcounter{theoremcounter}
\newtheorem{theorem}{Theorem}
\newtheorem{lemma}{Lemma}
\theoremstyle{definition}
\newtheorem{definition}{Definition}
\newcommand{\bs}{\boldsymbol}
\newcommand{\bra}[1]{\left\langle #1\right|}
\newcommand{\ket}[1]{\left|#1\right\rangle}
\newcommand{\ketbra}[2]{\ket{#1}\bra{#2}}
\newcommand{\partder}[2]{\frac{\partial #1}{\partial #2}}
\newcommand{\vast}{\bBigg@{4}}
\newcommand{\Vast}{\bBigg@{5}}
\newcommand{\thickhline}{%
    \noalign {\ifnum 0=`}\fi \hrule height 1pt
    \futurelet \reserved@a \@xhline
}
\newcolumntype{;}{@{\hskip\tabcolsep\vrule width 1pt\hskip\tabcolsep}}
\begin{document}

\title{Discrete Wigner Function Derivation of the Aaronson-Gottesman Tableau Algorithm}
\author{Lucas Kocia}
\affiliation{Department of Physics, Tufts University, Medford, Massachusetts 02155, U.S.A.}
\author{Yifei Huang}
\affiliation{Department of Physics, Tufts University, Medford, Massachusetts 02155, U.S.A.}
\author{Peter Love}
\affiliation{Department of Physics, Tufts University, Medford, Massachusetts 02155, U.S.A.}
\begin{abstract}
The Gottesman-Knill theorem established that stabilizer states and operations can be efficiently simulated classically. For qudits with dimension three and greater, stabilizer states and Clifford operations have been found to correspond to positive discrete Wigner functions and dynamics. We present a discrete Wigner function-based simulation algorithm for odd-$d$ qudits that has the same time and space complexity as the Aaronson-Gottesman algorithm. We show that the efficiency of both algorithms is due to the harmonic evolution in the symplectic structure of discrete phase space. The differences between the Wigner function algorithm and Aaronson-Gottesman are likely due only to the fact that the Weyl-Heisenberg group is not in $SU(d)$ for $d=2$ and that qubits have state-independent contextuality. This may provide a guide for extending the discrete Wigner function approach to qubits.
\end{abstract}
\maketitle

\section{Introduction}
\label{sec:intro}

A general set of unitary quantum operators on \(n\)-qudit states requires a number of operators that grows exponentially with \(n\). An important exception to this rule involves the set of Clifford operators acting on stabilizer states. These states play an important role in quantum error correction~\cite{Gottesman98} and are closed under action by Clifford gates. Efficient simulation of such systems was demonstrated with the tableau algorithm of Aaronson and Gottesman~\cite{Aaronson04,Gottesman98} for qubits (\(d=2\)). Finding the underlying reason for why such an efficient algorithm is possible for Clifford circuit simulation has since been the subject of much study~\cite{Gottesman99,Mari12,Howard14}.

Recent progress has been the result of work by Wootters~\cite{Wootters87}, Eisert~\cite{Mari12}, Gross~\cite{Gross06}, and Emerson~\cite{Howard14}, who have formulated a new perspective based on the discrete phase spaces of states and operators in finite Hilbert spaces using discrete Wigner functions. They have shown that stabilizer states have positive-definite discrete Wigner functions and that Clifford operators are positive-definite maps. This implies that Clifford circuits are efficiently simulatable on classical computers. In odd-dimensional systems, stabilizer states have been shown to be the discrete analogue to Gaussian states in continuous systems~\cite{Gross06} and Clifford group gates have been shown to have underlying harmonic Hamiltonians that preserve the discrete Weyl phase space points~\cite{Kocia16}. This means Clifford circuits are expressible by path integrals truncated at order \(\hbar^0\) and are thus manifestly classical~\cite{Kocia16,Dax17}.

This poses the question of what the relationship is between past efficient algorithms for Clifford circuits and the propagation of discrete Wigner functions of stabilizer states under Clifford operators. Here, we show that the original Aaronson-Gottesman tableau algorithm is actually equivalent to such a discrete Wigner function propagation and that the tableau matrix coincides with the discrete Wigner function of a stabilizer state. We accomplish this by first developing a Wigner function-based algorithm that classically simulates stabilizer state evolution under Clifford gates and measurements in the \(\hat Z\) Pauli basis for odd \(d\). We then show its equivalence to the well-known Aaronson-Gottesman tableau algorithm~\cite{Aaronson04}. Both algorithms require \(\mathcal{O}(n^2)\) qudits to represent \(n\) stabilizer states, \(\mathcal{O}(n)\) operations per Clifford operator, and deterministic and random measurements require \(\mathcal{O}(n^2)\) operations.

The Aaronson-Gottesman tableau algorithm makes use of the Heisenberg representation. This means that evolution is accomplished by updating an associated tableau or matrix representation of the Clifford operators instead of the stabilizer states themselves. Our algorithm is framed in the Schr\"odinger picture, and involves evolving the Wigner function of stabilizer states. Since the two algorithms are equivalent, the formulation of Clifford simulation in the Heisenberg picture is a choice and not a necessity for its efficient simulation. Furthermore, we reveal the purely classical basis of both algorithms and the physically intuitive phase space structures and symplectic properties on which they rely.

\section{Discrete Wigner Function of the Stabilizer Formalism}
\label{sec:stabcenterchord}

Before we discuss the discrete Wigner function, we introduce a basic framework that defines how a phase space behaves for odd \(d\)-dimensional Hilbert spaces. To begin, we associate the computational basis with the position basis, such that the Pauli \(\hat Z_j\) operator on the \(j\)th qubit for \(n\) qubits acts as a ``boost'' operator:
\begin{equation}
  \hat Z_j \ket{k_1, \ldots, k_j, \ldots, k_n} = e^{\frac{2 \pi i}{d} k_j} \ket{k_1, \ldots, k_j, \ldots, k_n}.
\end{equation}

The discrete Fourier transform operator is defined by:
\begin{eqnarray}
  \label{eq:hadamard}
  &&\hat F_j =\\
  && \frac{1}{\sqrt{d}} \sum_{\substack{k_j,l_j \in \\\mathbb{Z}/d\mathbb{Z}}} e^{\frac{2 \pi i}{d} k_j l_j} \ketbra{k_1,\ldots,k_j,\ldots,k_n}{l_1,\ldots,l_j,\ldots,\l_n}, \nonumber
\end{eqnarray}
This is equivalent to the Hadamard gate and allows us to define the Pauli \(\hat X_j\) operator as follows:
\begin{equation}
   \hat X_j \equiv \hat F_j^\dagger \hat Z_j \hat F_j.
\end{equation}

While \(\hat Z_j\) is a boost, \(\hat X_j\) is a shift operator because
\begin{equation}
  \hat X_j^{\delta q} \ket{k_1,\ldots,k_j,\ldots,k_n} \equiv \ket{k_1,\ldots,k_j\oplus\delta q,\ldots,k_n},
\end{equation}
where \(\oplus\) denotes integer addition mod \(d\).

We can reexpress the boost \(\hat Z_j\) and shift \(\hat X_j\) operators in terms of their generators, which are the conjugate \(\hat q_j\) and \(\hat p_j\) operators respectively:
\begin{equation}
  \label{eq:boost}
  \hat Z_j = e^{\frac{2 \pi i }{d}\hat q_j}
\end{equation}
and
\begin{equation}
  \label{eq:shift}
  \hat X_j = e^{-\frac{2 \pi i }{d} \hat p_j}.
\end{equation}
Thus, we can refer to the \(\hat X_j\) basis as the momentum (\(p_j\)) basis, which is equivalent to the Fourier transform of the \(q_j\) basis:
\begin{equation}
  \hat p_j = \hat F_j^\dagger \hat q_j \hat F_j.
\end{equation}
These bases form the discrete Weyl phase space \((\bs p,\bs q)\).

The Wigner function \(\Psi_x(\bs p,\bs q)\) of a pure state \(\ket{\Psi}\) is defined on this discrete Weyl phase space:
\begin{widetext}
\begin{eqnarray}
  \label{eq:weylfunctionpurestatediscrete}
  \Psi_x(\bs p,\bs q) &=& d^{-n} \sum_{\substack{\bs \xi_q \in \\(\mathbb{Z} / d \mathbb{Z})^{ n}}} e^{-\frac{2 \pi i}{d} \bs \xi_q \cdot \bs p} \Psi \left( \bs q + \frac{(d+1) \bs \xi_q }{2} \right) {\Psi^*} \left( \bs q - \frac{(d+1) \bs \xi_q}{2} \right).
\end{eqnarray}
\end{widetext}
We will shortly be interested in the discrete Wigner function of stabilizer states. But first we introduce the effect that the Clifford gates have in this discrete Weyl phase space.

\subsection{Clifford Gates}
A Clifford group gate \(\hat V\) is related to a symplectic transformation on the discrete Weyl phase space, governed by a symplectic matrix \(\bs{\mathcal M_{\hat V}}\) and vector \(\bs \alpha_{\hat V}\)~\cite{Kocia16}:
\begin{equation}
  \left( \begin{array}{c}\bs p'\\ \bs q'\end{array} \right) = \bs{\mathcal M}_{\hat V} \left[ \left( \begin{array}{c}\bs p\\ \bs q \end{array}\right) + \frac{1}{2} \bs \alpha_{\hat V} \right] + \frac{1}{2} \bs \alpha_{\hat V}.
  \label{eq:quadmap}
\end{equation}
Wigner functions \(\Psi_x(\bs x)\) of states evolve under Clifford operators \(\hat V\) by
\begin{equation}
  \Psi_x\left(\bs{\mathcal M_{\hat V}} \left(\bs x + \bs \alpha_{\hat V}/2\right) + \bs \alpha_{\hat V}/2\right),
  \label{eq:wignerevolution}
\end{equation}
where \(\bs x \equiv (\bs p, \bs q)\).

When considering Clifford gate propagation, we can restrict to the generators of the group. One such set of generators is made up of the phase-shift gate \(\hat P_i\), the Hadamard gate \(\hat F_i\), and the controlled-not \(\hat C_{ij}\) (which act on the \(i\)th and \(j\)th qudits).

The phase shift \(\hat P_i\) is a one-qudit gate with the underlying Hamiltonian \(H_{\hat P_i} = -\frac{d+1}{2} q_i^2 + \frac{d+1}{2} q_i\)~\cite{Kocia16}. Without loss of generality, we will instead consider
\begin{equation}
  \hat P'_i = \hat P_i \hat P_i \hat Z_i,
\end{equation}
which we will refer to as the phase-shift gate in this paper.

We note that the usual phase-shift can be obtained from the new one within the Clifford group:
\begin{equation}
  \hat P_i = \hat P'_i \hat P'_i \hat Z_i,
\end{equation}
where \([\hat P_i, \hat Z_i] = [\hat P'_i, \hat Z_i]=0\). Hence, \(\hat P'_i\) is an adequate replacement generator for \(\hat P_i\), and we will use it instead of \(\hat P_i\) from now on. Since its Hamiltonian has no linear term (\(H_{\hat P'_i} = -q_i^2\)), this leads to an easier presentation ahead since \(\bs \alpha_{\hat P'_i} = \bs 0\).

The corresponding equations of motion for \(\hat P'_i\) are \(\dot p_i = 2 q_i\) and \(\dot q_i = 0\). Hence, for \(\Delta t = 1\),
\begin{equation}
  \label{eq:phaseshiftstabmat}
  \left( \bs{\mathcal M}_{\hat P'_i} \right)_{j,k} = \delta_{j,k} + 2 \delta_{i,j} \delta_{n+i,k}.
\end{equation}

The Hadamard gate \(\hat F_i\) is a one-qudit gate and has the underlying Hamiltonian \(H_{\hat F_i} = -\frac{\pi}{4} (p_i^2 + q_i^2)\)~\cite{Kocia16}. The corresponding equations of motion are \(\dot p_i = \frac{\pi}{2} q_i\) and \(\dot q_i = -\frac{\pi}{2} p_i\). Hence, for \(\Delta t = 1\),
\begin{eqnarray}
  \label{eq:hadamardstabmat}
  \left(\bs{\mathcal M}_{\hat F_i}\right)_{j,k} = \delta_{j,k} - \delta_{i,j} \delta_{i,k} - \delta_{n+i,j} \delta_{n+i, k}\\
  + \delta_{i,j} \delta_{n+i,k} - \delta_{n+i,j} \delta_{i,k}, \nonumber
\end{eqnarray}
and \(\bs \alpha_{\hat F_i} = \bs 0\).

Finally, the two-qudit CNOT \(\hat C_{ij}\) on control qudit \(i\) and second qudit \(j\) has the corresponding Hamiltonian \(H_{\hat C_{ij}} = p_i q_j\)~\cite{Kocia16}. The corresponding equations of motion are \((\dot p_i,\dot p_j) = -(0,p_i)\) and \((\dot q_i, \dot q_j) = (q_j,0)\). Hence, for \(\Delta t = 1\),
\begin{equation}
  \label{eq:cnotstabmat}
\left(\bs{\mathcal M}_{\hat C_{ij}}\right)_{k,l} = \delta_{k,l} - \delta_{i,k} \delta_{j,l} + \delta_{n+j,k} \delta_{n+i,l},
\end{equation}
and \(\bs \alpha_{\hat C_{ij}} = \bs 0\).

\subsection{Wigner Functions of Stabilizer States}

A discrete Wigner function associated with the boost and shift operators defined in Eqs.~\ref{eq:boost} and~\ref{eq:shift} is given by the following theorem~\cite{Kocia16}:
  \begin{theorem}
  \label{thm:wignerfnofstabstateoddd}
    The discrete Wigner function \(\Psi_x(\bs x)\) of a stabilizer state \(\Psi\) for any odd \(d\) and \(n\) qudits is \(\delta_{\bs \Phi \times \bs x, \bs r}\) for \(2n \times 2n\) matrix \(\bs \Phi\) and \(2n\) vector \(\bs r\) with entries in \(\mathbb{Z}/d\mathbb{Z}\).
  \end{theorem}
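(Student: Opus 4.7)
The plan is to proceed by induction on the Clifford circuit that prepares the stabilizer state. Every stabilizer state $\ket{\Psi}$ on $n$ qudits can be written as $\hat V \ket{\bs 0}$ for some Clifford unitary $\hat V$, and $\hat V$ factors into the generators $\hat P'_i$, $\hat F_i$, $\hat C_{ij}$ introduced above. It therefore suffices to verify the claimed form for the computational basis state $\ket{\bs 0}$ and then show that each Clifford generator preserves it.

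For the base case I would substitute $\Psi(\bs k) = \delta_{\bs k, \bs 0}$ into Eq.~\ref{eq:weylfunctionpurestatediscrete}. The two wavefunction factors force $\bs q \pm (d+1)\bs \xi_q/2 = \bs 0$; since $d$ is odd, $2$ is invertible mod $d$, and adding and subtracting these conditions pins $\bs q = \bs 0$ and $\bs \xi_q = \bs 0$. The sum collapses to a single term, leaving $\Psi_x(\bs p,\bs q) = d^{-n}\delta_{\bs q,\bs 0}$, which has the claimed form $\delta_{\bs \Phi \bs x,\bs r}$ with
\[
\bs \Phi = \begin{pmatrix} \bs 0 & \bs 0 \\ \bs 0 & \bs I_n \end{pmatrix}, \qquad \bs r = \bs 0 .
\]

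For the inductive step I would assume $\Psi_x(\bs x) = \delta_{\bs \Phi \bs x,\bs r}$ and apply one Clifford generator $\hat V$. By Eq.~\ref{eq:wignerevolution} the updated Wigner function equals $\delta_{\bs \Phi[\bs{\mathcal M}_{\hat V}(\bs x + \bs \alpha_{\hat V}/2) + \bs \alpha_{\hat V}/2],\bs r}$. Expanding and regrouping the argument of the delta yields $\delta_{\bs \Phi' \bs x,\bs r'}$ with
\[
\bs \Phi' = \bs \Phi\,\bs{\mathcal M}_{\hat V}, \qquad \bs r' = \bs r - \tfrac{1}{2}\bs \Phi(\bs{\mathcal M}_{\hat V} + \bs I)\bs \alpha_{\hat V} .
\]
Each $\bs{\mathcal M}_{\hat V}$ in Eqs.~\ref{eq:phaseshiftstabmat}--\ref{eq:cnotstabmat} is a $2n \times 2n$ matrix over $\mathbb{Z}/d\mathbb{Z}$, and each $\bs \alpha_{\hat V}$ is $\bs 0$ for the three replacement generators chosen in the paper, so the updated pair $(\bs \Phi',\bs r')$ remains in the required class. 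Induction on the circuit length then finishes the argument.

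The main obstacle I anticipate is simply careful bookkeeping of the factors of $1/2$ and $(d+1)/2$ that appear in both Eq.~\ref{eq:weylfunctionpurestatediscrete} and the affine-symplectic map. These factors are well defined only because $d$ is odd, which is the sole place the hypothesis is used in an essential way; once it is accepted, the remainder reduces to straightforward linear algebra over $\mathbb{Z}/d\mathbb{Z}$.
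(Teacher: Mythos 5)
First, note that the paper does not actually prove Theorem~\ref{thm:wignerfnofstabstateoddd}; it imports it from Ref.~\cite{Kocia16} and only records the special case $\ket{\Psi}_0=\ket{\bs q_0}$, for which $\bs\Phi_0=\left(\begin{smallmatrix}0&0\\0&\mathbb{I}_n\end{smallmatrix}\right)$ and $\bs r_0=(\bs 0,\bs q_0)$. So there is no in-paper proof to match; your circuit-induction strategy is the natural one, your base-case computation is correct (modulo the overall $d^{-n}$ normalization, which the paper also suppresses), and your update rule $\bs\Phi'=\bs\Phi\,\bs{\mathcal M}_{\hat V}$, $\bs r'=\bs r-\tfrac12\bs\Phi(\bs{\mathcal M}_{\hat V}+\bs I)\bs\alpha_{\hat V}$ follows correctly from Eq.~\ref{eq:wignerevolution} (up to the paper's own ambiguity about whether the composition uses $\bs{\mathcal M}$ or $\bs{\mathcal M}^{-1}$, which is immaterial here since both lie in $Sp(2n,\mathbb{Z}/d\mathbb{Z})$).

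There is, however, a genuine gap in the inductive step as you have set it up. The three generators you induct over, $\hat P'_i$, $\hat F_i$, $\hat C_{ij}$, all have $\bs\alpha_{\hat V}=\bs 0$ by construction ($\hat P'_i$ was chosen precisely to kill the linear term in the Hamiltonian). Consequently every word in these generators implements a purely linear (origin-preserving) symplectic map, your own formula gives $\bs r'=\bs r$ at every step, and starting from $\bs r=\bs 0$ your induction only ever reaches stabilizer states with $\bs r=\bs 0$, i.e.\ those whose Wigner support passes through the phase-space origin. This is a proper subset of the stabilizer states: already $\ket{\bs q_0}=\hat X^{\bs q_0}\ket{\bs 0}$ with $\bs q_0\neq\bs 0$ has $\bs r_0=(\bs 0,\bs q_0)\neq\bs 0$, and it cannot be written as a product of $\hat P'_i$, $\hat F_i$, $\hat C_{ij}$ acting on $\ket{\bs 0}$ because such products fix the origin. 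The set $\{\hat P'_i,\hat F_i,\hat C_{ij}\}$ generates only the quotient of the Clifford group by the Weyl--Heisenberg displacements, not the full group (the paper's original $\hat P_i$, with its nonzero $\bs\alpha_{\hat P_i}$, is what supplies the displacements via $\hat Z_i=\hat P_i'^{-2}\hat P_i$ and $\hat X_i=\hat F_i^\dagger\hat Z_i\hat F_i$). The fix is small and entirely within your framework: enlarge the generating set to include the shift and boost operators $\hat X_j$, $\hat Z_j$ (Cliffords with $\bs{\mathcal M}=\bs I$ and $\bs\alpha\neq\bs 0$), for which your $\bs r'$ formula produces exactly the translations of $\bs r$ needed to reach general $\bs r\in(\mathbb{Z}/d\mathbb{Z})^{2n}$. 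You should also state explicitly the fact your induction rests on---that every $n$-qudit stabilizer state equals $\hat V\ket{\bs 0}$ for some element $\hat V$ of the (full) Clifford group---since for composite odd $d$ this requires the same care about what counts as a stabilizer state as the rest of the paper.
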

  In particular, if we begin with a stabilizer state defined as \(\ket \Psi_0 = \ket {\bs q_0}\), then it was shown in~\cite{Kocia16} that \({\Psi_x}(\bs x) = \delta_{\bs \Phi_0 \times \bs x, \bs r_0}\), where \(\bs \Phi_0 = \left(\begin{array}{cc} 0 & 0\\ 0 & \mathbb{I}_n\end{array} \right)\), and \(\bs r_0 = (\bs 0, \bs q_0)\).

  \section{Wigner Stabilizer Algorithm}
  \label{sec:wigstabalgorithm}

  With the discrete Wigner function of a stabilizer state defined in Theorem~\ref{thm:wignerfnofstabstateoddd} and the effect of the Clifford group generators on discrete Wigner functions defined in Eq.~\ref{eq:wignerevolution}, we can now examine the effect Clifford operators have on stabilizer states.
  
\subsection{Stabilizer Representation}
\label{sec:stabrep}

From Theorem~\ref{thm:wignerfnofstabstateoddd}, propagation of the stabilizer state \(\Psi\) can be represented by considering the state in the Wigner representation \(\Psi_x(\bs x) = \delta_{\bs \Phi_t \cdot \bs x, \bs r_t}\). In this way, \(\bs \Phi_t\) and \(\bs r_t\) specify a linear system of equations in terms of \(\bs p_t\) and \(\bs q_t\). The first \(n\) rows of \(\bs \Phi_t\) are the coefficients of \((\bs p_t, \bs q_t)^T\) in \(\bs p_0(\bs p_t, \bs q_t)\) and the last \(n\) rows of \(\bs \Phi_t\) are the coefficients of \((\bs p_t, \bs q_t)^T\) in \(\bs q_0(\bs p_t, \bs q_t)\):
\begin{equation}
  \left( \begin{array}{c} \bs p_0\\ \bs q_0 \end{array} \right) = \bs \Phi_t \left( \begin{array}{c} \bs p_t\\ \bs q_t \end{array} \right),
\end{equation}
The Kronecker delta function sets this linear system of equations equal to \(\bs r_t\). In this way, an affine map---a linear transformation displaced from the origin by \(\bs r_t\)---is defined. This system of equations must be updated after every unitary propagation and measurement.

Since the Wigner functions \(\Psi_x(\bs x)\) of stabilizer states propagate under \(\bs{\mathcal M}\) as \(\Psi_x(\bs{\mathcal M \bs x})\), it follows that
\begin{equation}
  \label{eq:phiprop}
  \bs \Phi_t \rightarrow \bs \Phi_t \bs{\mathcal M}_t^{-1}.
\end{equation}
The importance of vector \(\bs r_t\) and when it must be updated will become evident when we consider random measurements. Hence, after \(n\) operations \(\bs {\mathcal M}_1\), \(\bs{\mathcal M}_2\), \(\ldots\), \(\bs{\mathcal M}_n\),
\begin{equation}
  \bs{\mathcal M}_t^{-1} = \bs{\mathcal M}_1^{-1} \bs{\mathcal M}_2^{-1} \ldots \bs{\mathcal M}_n^{-1}.
\end{equation}
The matrices are ordered chronologically left-to-right instead of right-to-left.

Since \(\bs{\mathcal M}\) is symplectic, \(\bs{\mathcal M}_t^{-1} = - \bs{\mathcal J} \bs{\mathcal M}_t^T \bs{\mathcal J}\) where
\begin{equation}
  \bs{\mathcal J} = \left(\begin{array}{cc} 0 & -\mathbb{I}_n\\ \mathbb{I}_n & 0 \end{array}\right),
\end{equation}
and \(\mathbb{I}_n\) is the \(n \times n\) identity matrix.

Thus the the stability matrices \(\bs{\mathcal M}\) for \(\hat F_i\), \(\hat P'_i\) and \(\hat C_{ij}\) given in Eqs.~\ref{eq:phaseshiftstabmat}-\ref{eq:cnotstabmat} differ from their inverses only by sign changes in their off-diagonal elements:
\begin{equation}
  \label{eq:phaseshiftstabmatinv}
  \left( \bs{\mathcal M}_{\hat P'_i}^{-1} \right)_{j,k} = \delta_{j,k} - 2 \delta_{i,j} \delta_{n+i,k},
\end{equation}
\begin{eqnarray}
  \label{eq:hadamardstabmatinv}
  \left(\bs{\mathcal M}_{\hat F_i}^{-1}\right)_{j,k} = \delta_{j,k} - \delta_{i,j} \delta_{i,k} - \delta_{n+i,j} \delta_{n+i, k}\\
  - \delta_{i,j} \delta_{n+i,k} + \delta_{n+i,j} \delta_{i,k}, \nonumber
\end{eqnarray}
and
\begin{equation}
  \label{eq:cnotstabmatinv}
\left(\bs{\mathcal M}_{\hat C_{ij}}^{-1}\right)_{k,l} = \delta_{k,l} + \delta_{i,k} \delta_{j,l} - \delta_{n+j,k} \delta_{n+i,l}.
\end{equation}

We assume the quantum state is initialized in the computational basis state \(\Psi_0 = \underbrace{\ket{0} \otimes \cdots \otimes \ket{0}}_{n}\) and so initially we should set \(\bs \Phi_0 = \left(\begin{array}{cc} \bs 0 & \bs 0\\ \bs 0 & \mathbb{I}_n \end{array}\right)\) and \(\bs r_0 = \bs 0\). The initial stabilizer state is \(\bs {\Psi_x}_0 = \delta_{\bs q_t, 0}\). However, it will become clear when we discuss measurements that it is practically useful to instead set
\begin{equation}
  \bs \Phi_0 = \left(\begin{array}{cc} \mathbb{I}_n & \bs 0\\ \bs 0 & \mathbb{I}_n \end{array}\right),
\end{equation}
thereby setting \({\Psi_x}_0 = \delta_{(\bs p_t, \bs q_t), (\bs 0, \bs 0)}\)---not a true Wigner function. This is equivalent to the last matrix if the first \(n\) rows in \(\bs \Phi_t \bs x\) and \(\bs r_t\) are ignored---the same as ignoring \(\bs p_0(\bs p_t, \bs q_t)\). In fact, we have two Wigner functions here: one defined by the first \(n\) rows and another by the last \(n\) rows. We proceed in this manner, ignoring the first \(n\) rows, until their usefulness becomes apparent to us. 

For \(n\) qudits unitary propagation requires \(\mathcal O(n^2)\) dits of storage to track \(\bs{\Phi}_t\) and \(\bs r_t\). More precisely, since \(\bs{\Phi}_t\) is a \(2n\times 2n\) matrix and \(\bs r_t\) is an \(2n\)-vector, \(2n(2n+1)\) dits of storage are necessary.

\subsection{Unitary Propagation}
\label{sec:propalgorithm}

\(\bs \Phi_t\) contains the coefficients of the linear equations relating \(\bs x_0\) to \(\bs x_t\). Each row is one equation relating \({q_0}_i\) or \({p_0}_i\) to \(\bs x_t\). When manipulating rows of \(\bs \Phi_t\) we shall refer to the linear equations that these rows define.

Examining Eqs.~\ref{eq:phaseshiftstabmatinv},~\ref{eq:hadamardstabmatinv}, and~\ref{eq:cnotstabmatinv}, we see that the inverse stability matrices of the generator gates \(\hat F_i\), \(\hat P_i\) and \(\hat C_{ij}\) are the sum of an identity matrix and a matrix with a finite number of non-zero off-diagonal elements. The number of these off-diagonal elements is independent of the number of qudits, \(n\). Hence, multiplying \(\bs \Phi_t\) with a new stability matrix in Eq.~\ref{eq:phiprop} and evaluating the matrix multiplication is equivalent to performing a finite number of \(n\)-vector dot products and so requires \(\mathcal O(n)\) operations. Therefore, keeping track of propagation of stabilizer states by Clifford gates can be simulated with \(\mathcal O(n)\) operations.

Let us examine these unitary operations more closely. Defining \(\oplus\) and \(\ominus\) to be mod \(d\) addition and subtraction respectively, we find:

\emph{Phase gate on qudit \(i\) (\(\hat P'_i\))}. For all \(j \in \{1,\, \ldots,\, 2n\}\), set \(\bs{\Phi}_{j,n+i} \mapsto \bs{\Phi}_{j,n+i} \ominus 2 \bs{\Phi}_{j,n}\).

\emph{Hadamard gate on qudit \(i\) (\(\hat F_i\))}. Negate \(\bs{\Phi}_{j,i}\) mod \(d\), and then swap \(\ominus \bs{\Phi}_{j,i}\) and \(\bs{\Phi}_{n+i,j}\).

\emph{CNOT from control \(i\) to target \(j\) (\(\hat C_{ij}\))}. Set \(\bs{\Phi}_{k,j} \mapsto \bs{\Phi}_{k,j} \oplus \bs{\Phi}_{k,i}\) and \(\bs{\Phi}_{k,n+i} \mapsto \bs{\Phi}_{k,n+i} \ominus \bs{\Phi}_{k,n+j}\).

This confirms that unitary propagation in this scheme requires \(\mathcal O(n)\) operations.

\subsection{Measurement}
\label{sec:measalgorithm}

The outcome of a measurement \(\hat Z_i\) on a stabilizer state can be either random or deterministic. As described above, the bottom half of \(\bs \Phi_t\) defines \({q_0}_j\) for \(j \in \{1, \,\ldots,\, n\}\), each of which is a linear combination of \({q_t}_i\) and \({p_t}_i\). The entries in the \((n+j)\)th row of \(\bs \Phi_t\) give the coefficient of \({p_t}_i\) and \({q_t}_i\) in \({q_0}_j\) for \(j \in \{1, \,\ldots,\, n\}\). If the coefficient of \({p_t}_i\) in any \({q_0}_j\) is non-zero then the measurement \(\hat Z_i\) will be random. If all coefficients of \({p_t}_i\) are zero for \({q_0}_j\) \(\forall j\), then the measurement of \(\hat Z_i\) will be deterministic. This can be seen from the fact that if our stabilizer state \(\ket \Psi\) is an eigenstate of \(\hat Z_i\) then  \(\hat Z_i \ket \Psi = e^{i \phi} \ket \Psi\) for some \(\phi \in \mathbb R\) and (discrete) Wigner functions do not change under a global phase. Thus, measuring \(\hat Z_i\) leaves the Wigner function of \(\ket \Psi\) invariant if the measurement is deterministic. Since \(\hat Z_i\) is a boost operator that increments the momentum of a state by one, its effect on the linear system of equations specified by the Wigner function is:
\begin{equation}
  \bs \Phi_t \left(\begin{array}{c}{p_t}_1\\ \vdots\\ {p_t}_i\\ \vdots\\ {p_t}_n\\ \bs q_t \end{array}\right) = \left(\begin{array}{c}\bs {r_t}_p\\ \bs {r_t}_q\end{array}\right) \underset{\hat Z_i}{\mapsto} \bs \Phi_t \left(\begin{array}{c}{p_t}_1\\ \vdots\\ {p_t}_i+1\\ \vdots\\ {p_t}_n\\ \bs q_t \end{array}\right) = \left(\begin{array}{c}\bs {r_t}_p\\ \bs {r_t}_q\end{array}\right).
\end{equation}
Thus, if the lower half of the \(i\)th column of \(\bs \Phi_t\) is zero then \(\hat Z_i\) leaves the Wigner function invariant (and so the measurement is deterministic). Verifying that these coefficients are all zero takes \(\mathcal O(n)\) operations for each \(\hat Z_i\).

In other words, to see if a given measurement of \(\hat Z_i\) is random or deterministic, a search must be performed for non-zero \({\bs{\Phi}_t}_{n+j,i}\) elements. If such a non-zero element exists then the measurement is random since it means that the final momentum of qudit \(i\) affects the state of the stabilizer and so its position must be undetermined (by Heisenberg's uncertainty principle). If no such finite \({\bs{\Phi}_t}_{n+j,i}\) element exists, then the measurement \(\hat Z_i\) is deterministic.

We now describe the algorithm in detail for these two cases:

\textbf{Case 1:} Random Measurement

Let the \((n+j)\)th row in the bottom half of \(\bs \Phi_t\) have a non-zero entry in the \(i\)th column, \({\bs{\Phi}_t}_{n+j,i}\). Since the random measurement \(\hat Z_i\) will project qudit \(i\) onto a position state, we will replace the \((n+j)\)th row with \({q_0}_i = q'_i\) (the uniformly random outcome of this measurement). After this projection onto a position state, none of the other qudit's positions should depend on qudit \(i\)'s momentum, \({p_t}_i\). To accomplish this, before we replace row \((n+j)\), we solve its equation for \({p_t}_i\) and substitute every instance of \({p_t}_i\) in the linear system of equations with this solution. As a result, every equation will no longer depend on \({p_t}_i\) and we can go ahead and replace the \((n+j)\)th row with \({q_0}_i = q'_i\).

There is one more thing to do, which will be important for deterministic measurements: replace the \(j\)th row with the old \((n+j)\)th row. This sets \({p_0}_i = {q_0}_j(\bs p_t, \bs q_t)\), which becomes the only remaining equation explicitly dependent on \({p_t}_i\). In other words, \({p_0}_i \propto {p_t}_i\), similar to the beginning when we set \({p_0}_i = {p_t}_i\) by setting \(\bs \Phi = \mathbb I_{2n}\). However, now we also conserve any dependence \({p_0}_i\) has on the other qudits incurred during unitary propagation. In other words, we conserve \({p_t}_i\)'s dependence upon the other qudits, but only in the Wigner function specified by the top \(n\) rows, which we ignore otherwise.

After replacing the equation specified by row \((n+j)\) of \(\bs{\Phi}_t\) and \(\bs r_t\) with a randomly chosen measurement outcome \(q'_i\) (i.e. \({q_0}_i = q'_i\)), the identification of rows \((n+i)\) and \((n+j)\) are exchanged, so that the former now specifies \({q_0}_j(\bs p_t, \bs q_t)\) while the latter specifies \({q_0}_i(\bs p_t, \bs q_t)\), respectively. \({p_0}_i\) has also been updated by replacing the \(j\)th row in the first half of \(\bs{\Phi}_t\), with the \((n+j)\)th row we just changed. Again, this row now describes \({p_0}_i(\bs p_t, \bs q_t)\) while the \(i\)th row now specifies \({p_0}_j(\bs p_t, \bs q_t)\). Overall, this takes \(\mathcal O(n^2)\) operations since we are replacing \(\mathcal O(n)\) rows with \(\mathcal O(n)\) entries.

\textbf{Case 2:} Deterministic Measurement

Since the measurement is deterministic, \(\bs \Phi_t\) and \(\bs r_t\) do not change. The \(n\) equations specified by the bottom half of \(\bs \Phi_t \bs x_t = \bs r_t\) can be used to solve for \({q_t}_i\)---the deterministic measurement outcome. In general, this can also be done by inverting \(\bs \Psi_t\) and evaluating \(\bs x_t = \bs \Phi_t^{-1} \cdot \bs r_t\) for \(q_i\). Aaronson and Gottesman themselves noted that such a matrix inversion is possible, but practically takes \(\mathcal{O}(n^3)\) operations~\footnote{However, we are not certain if Aaronson and Gottesman were referring to the \(\bs \Phi_t\) matrix corresponding to the \(2n \times 2n\) part of their tableau when they discuss matrix inversion in~\cite{Aaronson04}.}.

Fortunately, there is another method that scales as \(\mathcal O(n^2)\) and requires use of the \(n\) equations represented by the top \(n\) rows of \(\bs \Phi_t\), which were included in our description by setting \(\bs \Phi_0 = \mathbb{I}_{2n}\). The linear system of \(n\) equations represented by \(\bs \Phi_t \bs x_t = \bs r_t\) can be written as
\begin{eqnarray}
  \bs \Phi_t \bs x_t &=& \bs r_t\\
  \left( \begin{array}{c} \bs p_0(\bs p_t, \bs q_t) \\ \bs q_0(\bs p_t, \bs q_t) \end{array} \right) &=& \left( \begin{array}{c} \bs {r_t}_p \\ \bs {r_t}_q \end{array} \right),
\end{eqnarray}
where we are interested in linear combinations of the bottom half, \(\bs q_0(\bs p_t, \bs q_t)\), to solve for the measurement outcome \({q_t}_i\):
\begin{equation}
  \label{eq:lincombdetermistic}
  \sum_{j=1}^n c_{ij} {q_0}_j = {q_t}_i.
\end{equation}
where \(c_{ij} \in \mathbb{Z}/d\mathbb{Z}\).

\begin{lemma}
  \label{lem:measurement}
  The coefficient in front of \({p_t}_i\) in the row of \(\bs \Phi_t\) that specifies \({p_0}_j(\bs p_t, \bs q_t)\), \({\bs \Phi_t}_{ji}\), is equal to the coefficient \(c_{ij}\) in front of \({q_0}_j\) that makes up \({q_t}_i\) in Eq.~\ref{eq:lincombdetermistic}. Or, equivalently,
  \begin{equation}
    \label{eq:deterministiccoeff}
  c_{ij} = {q_0}_j \cdot {q_t}_i(\bs p_0, \bs q_0) = {p_0}_j(\bs p_t, \bs q_t) \cdot {p_t}_i = {\bs \Phi_t}_{ji}.
\end{equation}
\end{lemma}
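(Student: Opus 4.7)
The plan is to exploit the symplectic structure that the text has already established: since $\bs\Phi_t$ arises as a product of inverses of symplectic stability matrices $\bs{\mathcal M}$, it is itself symplectic, so $\bs\Phi_t^{-1} = -\bs{\mathcal J}\bs\Phi_t^T\bs{\mathcal J}$. Inverting the relation $\bs\Phi_t\bs x_t = \bs r_t$ (affine shifts by $\bs r_t$ do not affect the linear coefficients $c_{ij}$) expresses $\bs x_t$ in terms of $\bs x_0$, and the block pattern of $-\bs{\mathcal J}\bs\Phi_t^T\bs{\mathcal J}$ immediately ties the coefficient of $q_{0,j}$ in $q_{t,i}$ to the coefficient of $p_{t,i}$ in $p_{0,j}$.

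Concretely, I would partition
\begin{equation*}
\bs\Phi_t = \begin{pmatrix} A & B \\ C & D \end{pmatrix},
\end{equation*}
where each block is $n\times n$ and, by the definition stated in Section~\ref{sec:stabrep}, $A_{ji}$ is the coefficient of $p_{t,i}$ in $p_{0,j}$ and $C_{ji}$ is the coefficient of $p_{t,i}$ in $q_{0,j}$. A direct block computation using $\bs{\mathcal J}=\bigl(\begin{smallmatrix}0 & -\mathbb{I}_n\\ \mathbb{I}_n & 0\end{smallmatrix}\bigr)$ yields
\begin{equation*}
\bs\Phi_t^{-1} = -\bs{\mathcal J}\bs\Phi_t^T\bs{\mathcal J} = \begin{pmatrix} D^T & -B^T \\ -C^T & A^T \end{pmatrix}.
\end{equation*}
Reading off the bottom half gives $\bs q_t = -C^T\bs p_0 + A^T\bs q_0$ (up to the additive constants coming from $\bs r_t$), hence
\begin{equation*}
q_{t,i} = -\sum_j C_{ji}\,p_{0,j} + \sum_j A_{ji}\,q_{0,j}.
\end{equation*}

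The final step is to invoke the deterministic-measurement hypothesis from Section~\ref{sec:measalgorithm}: $\hat Z_i$ is deterministic precisely when every $(\bs\Phi_t)_{n+j,i} = C_{ji}$ vanishes, i.e.\ the $i$-th column of $C$ is zero. Under this assumption the $\bs p_0$ contribution drops out, and comparison with Eq.~\ref{eq:lincombdetermistic} forces $c_{ij} = A_{ji} = (\bs\Phi_t)_{ji}$, which is exactly the coefficient of $p_{t,i}$ in $p_{0,j}$. This establishes the chain of equalities in Eq.~\ref{eq:deterministiccoeff}.

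The main obstacle is not analytic but notational: one must keep straight the transpose-and-block-swap that occurs in $-\bs{\mathcal J}\bs\Phi_t^T\bs{\mathcal J}$ so that the ``row vs.\ column'' and ``$p$ vs.\ $q$'' roles line up correctly, and make explicit that the deterministic-measurement condition is precisely the statement that kills the otherwise-present $\bs p_0$ terms. Once the block inverse is written down, the lemma is an immediate read-off; the symplectic duality between $\bs p$ and $\bs q$ encoded in $\bs{\mathcal J}$ is the true content of the claim.
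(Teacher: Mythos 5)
Your proof is correct and follows essentially the same route as the paper: both invert the symplectic relation via $-\bs{\mathcal J}\bs{\mathcal M}^T\bs{\mathcal J}$ and read the coefficient off the resulting block structure. If anything, your version is slightly more careful than the paper's, since you keep the block transposes explicit and you state explicitly that the deterministic-measurement condition (vanishing of the $i$th column of the lower-left block) is what eliminates the $\bs p_0$ contribution so that Eq.~\ref{eq:lincombdetermistic} is well posed--a step the paper's proof leaves implicit.
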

\begin{proof}
  \begin{figure}[ht]
\includegraphics[scale=1.0]{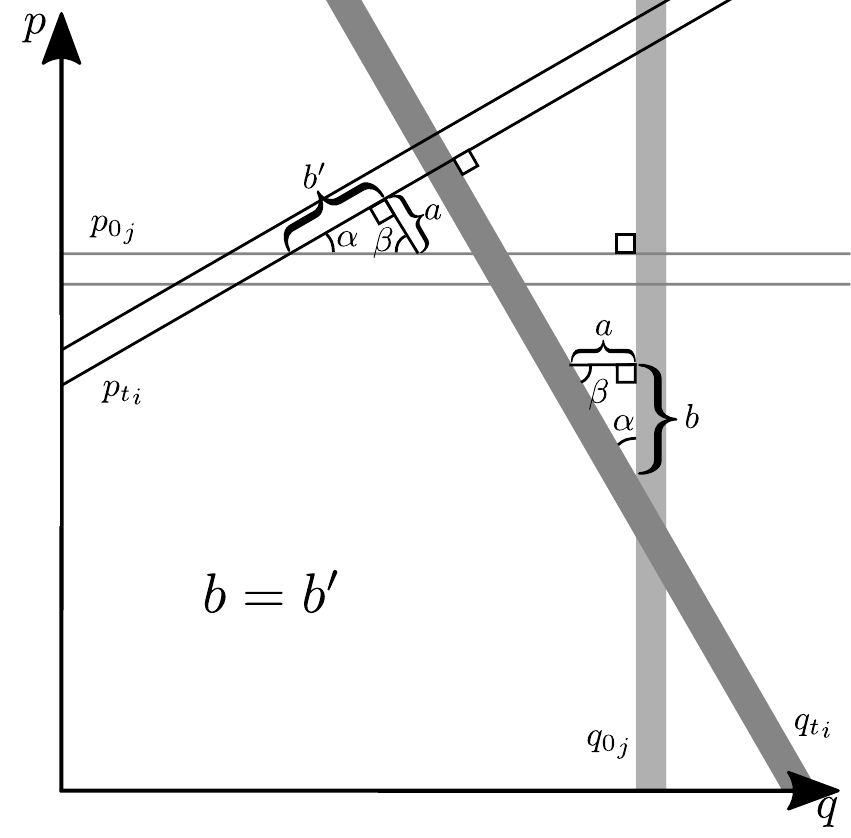}
\caption{The initial perpendicular manifolds \({p_0}_j\) and \({q_0}_j\) and the harmonically evolved perpendicular manifolds \({p_t}_i\) and \({q_t}_i\). Description of the various lengths and angles are given in the text in the proof of Lemma~\ref{lem:measurement}.}
\label{fig:qdotproduct}
  \end{figure}
  Under evolution under the Clifford group operators,
  \begin{equation}
      \left( \begin{array}{c} \bs p_t\\ \bs q_t \end{array} \right) = \bs{\mathcal M}_t \left( \begin{array}{c} \bs p_0\\ \bs q_0 \end{array} \right).
    \end{equation}
    \(\bs{\mathcal M}_t^{-1} = - \bs{\mathcal J} \bs{\mathcal M}_t^T \bs{\mathcal J}\) since \(\bs{\mathcal M}_t\) is symplectic. This means that we can express the matrix inversion as follows:
    \begin{eqnarray}
      \left( \begin{array}{c} \bs p_0\\ \bs q_0 \end{array} \right) &=& \bs{\mathcal M}_t^{-1} \left( \begin{array}{c} \bs p_t\\ \bs q_t \end{array} \right)\\
      &=& - \bs{\mathcal J} \bs{\mathcal M}_t^T \bs{\mathcal J} \left( \begin{array}{c} \bs p_t\\ \bs q_t \end{array} \right)\\
      &=& - \bs{\mathcal J} \left( \begin{array}{cc}\left(\bs{\mathcal M}_t\right)_{11} & \left(\bs{\mathcal M}_t\right)_{12} \\ \left(\bs{\mathcal M}_t\right)_{21} & \left(\bs{\mathcal M}_t\right)_{22} \end{array} \right)^T \bs{\mathcal J} \left( \begin{array}{c} \bs p_t\\ \bs q_t \end{array} \right)\\
      \label{eq:stabmatinversion}
      &=& \left( \begin{array}{cc}\left(\bs{\mathcal M}_t\right)_{22} & \left(-\bs{\mathcal M}_t\right)_{12} \\ \left(-\bs{\mathcal M}_t\right)_{21} & \left(\bs{\mathcal M}_t\right)_{11} \end{array} \right) \left( \begin{array}{c} \bs p_t\\ \bs q_t \end{array} \right).
    \end{eqnarray}
    Therefore \({\left({\mathcal M}_t^{-1}\right)_{11}}_{i,j} = {\left({\mathcal M}_t\right)_{22}}_{i,j}\), and so
    \begin{equation}
      c_{ij} = {q_0}_j \cdot {q_t}_i(\bs p_0, \bs q_0) = {p_0}_j(\bs p_t, \bs q_t) \cdot {p_t}_i = {\bs \Phi_t}_{ji}.
    \end{equation}
    
    This property can also be seen in the drawing of phase space shown in Fig.~\ref{fig:qdotproduct}. There, initial perpendicular \({p_0}_j\) and \({q_0}_j\) manifolds are drawn along with harmonically evolved \({p_t}_i\) and \({q_t}_i\) manifolds, which remain perpendicular to each other and make an angle \(\alpha\) to the first \({p_0}_j\) and \({q_0}_j\) manifolds, respectively. The projection of \({q_t}_i(\bs p_0, \bs q_0)\) onto \({q_0}_j\) can be represented as the length \(b\) of a right triangle's adjacent side to the angle \(\alpha\), with an opposite side set to some length \(a\). The projection of \({p_0}_j(\bs p_t, \bs q_t)\) onto \({p_t}_i\) is similarly represented by the length \(b'\) of a right triangle's adjacent side to the angle \(\alpha\), with an opposite side also set to length \(a\). It follows that the third angle \(\beta\) in both triangles must be the same, and so by the law of sines
    \begin{equation}
      \frac{a}{\sin \alpha} = \frac{b}{\sin \beta} = \frac{b'}{\sin \beta}.
    \end{equation}
    Therefore \(b = b'\) and so these two projections are equal to one another.

    In the discrete Weyl phase space such manifolds must lie along grid phase points and obey the periodicity in \(x_p\) and \(x_q\), but the premise is the same.
\end{proof}
Overall, the procedure outlined in Lemma~\ref{lem:measurement} for deterministic measurements takes \(\mathcal O(n^2)\) operations since Eq.~\ref{eq:deterministiccoeff} is a sum of \(\mathcal O(n)\) vectors made up of \(\mathcal O(n)\) components. So, the overall measurement protocol takes \(\mathcal O(n^2)\) operations. Note that this formulation of the algorithm shows that it is the symplectic structure on phase space and the linear transformation under harmonic evolution that allows the inversion (Eq.~\ref{eq:stabmatinversion}) to be performed efficiently.

\section{Aaronson-Gottesman Tableau Algorithm}
\label{sec:aaronsongott}

The Aaronson-Gottesman tableau algorithm was originally defined for qubits (\(d = 2\))~\cite{Aaronson04}. Like the algorithm we presented in the previous section, it only requires overall \(\mathcal O(n^2)\) operations for propagationan and measurement for \(n\) qubits. The algorithm has been proven to be extendable to \(d>2\)~\cite{deBeaudrap11} and similar algorithms have been formulated in \(d>2\)~\cite{Yoder12}. Alternatives have also been developed to the tableau formalism, though they prove to be equally efficient in worst-case scenarios~\cite{Anders06}. However, we are not aware of any direct extension of the Aaronson-Gottesman tableau algorithm to dimensions greater than two. In this and the next section, we will show that the Wigner algorithm presented in Section~\ref{sec:wigstabalgorithm} is equivalent to the Aaronson-Gottesman tableau algorithm.

\subsection{Stabilizer Representation}
\label{sec:stabrep2}
The Aaronson-Gottesman algorithm is defined the stabilizer formalism. It keeps track of the evolution of a stabilizer state by updating the generators of the stabilizer group, elements of which are defined as follows:
  \begin{definition}
    \label{def:stabilizers}
    A set of operators that satisfies \(S= \{ \hat g\in\mathcal{P}\) such that \(\hat g\ket{\psi}=\ket{\psi} \} \) are called the \textit{stabilizers} of state \(\ket{\psi}\), where \(\mathcal P\) is the set of Pauli operators, each of which has the form \(e^{\frac{\pi i}{2}\alpha} \hat P_1 \cdots \hat P_n\) where \(\alpha \in \{0,1,2,3\}\) for \(n\) qubits with \(\hat P_i \in \{\hat I_i, \hat Z_i, \hat X_i, \hat Y_i\}\).
  \end{definition}

For the sake of completeness, we present here a summary of the Aaronson-Gottesman algorithm, in order to compare it to our algorithm. For more details, see~\cite{Gottesman98,Aaronson04}.

Each \(n\)-qubit stabilizer state is uniquely determined by \(2^n\) Pauli operators. There are only \(n\) generators of this Abelian group of \(2^n\) operators. Therefore, an \(n\)-qubit stabilizer state is defined by the \(n\) generators of its stabilizer state. Every element in this set of generators, \(\{\hat g_1, \hat g_2,\ldots, \hat g_n \}\), is in the Pauli group, and each generator has the form:
\begin{equation}
  \hat g_{i}=\pm \hat P_{i1}\ldots \hat P_{in}.
  \label{eq:generator}
\end{equation}
Any unitary propagation by Clifford operators or measurement of the stabilizer state changes at least some of the \(\hat P_{ij}\) elements of the \(n\) generators of the state's stabilizer. This includes the \(\pm 1\) phase in Eq.~\ref{eq:generator}, which must also be kept track of in Aaronson-Gottesman's algorithm.

\subsection{Unitary Propagation}
\label{sec:unitprop2}

For each Clifford operation, Aaronson and Gottesman showed that only \(\mathcal{O}(n)\) operations are necessary to update all generators~\cite{Aaronson04}. Specifically, according to the update rules in Table~\ref{tab:aaronsongottesmanprop}, each generator can be updated with a constant number of operators for every single Clifford gate, therefore \(O(n)\) in total. However, it is a little more complicated to update the generators after each measurement. To do this efficiently, Aaronson introduced ``\textit{destabilizers}'':
  
 \begin{definition}
 \label{def:destabilizers}
 \textit{Destabilizers} \(\{ \hat g'_1,\ldots,\hat g'_n \}\) are the operators that generate the full Pauli group with the stabilizers \(\{ \hat g_1,\ldots, \hat g_n\}\). They
 have the following properties:
 \begin{itemize}
 \item[(i)] \(\hat g'_1\), \(\hat g'_2\), \(\ldots\), \(\hat g'_n\) commute.
 \item[(ii)] Each destabilizer \(\hat g'_h\) anti-commutes with the corresponding stabilizer \(\hat g_h\), and commutes with all other stabilizers.
 \end{itemize}
 \end{definition}
  
 To incorporate the destabilizers, a \textit{tableau} becomes useful to see how they play a role in updating the stabilizer generators during measurement~\cite{Aaronson04}. 
  \begin{table}[h]
    \begin{tabular}{|c|c|c|}
      \hline
      Gates & Input & Output\\
      \hline
      \multirow{2}*{Hadamard}
      & \(\hat X\) & \(\hat Z\) \\
      & \(\hat Z\) & \(\hat X\)\\
      \hline
      \multirow{2}*{phase}
      & \(\hat X\) & \(\hat Y\) \\
      & \(\hat Z\) & \(\hat Z\)\\
      \hline
      \multirow{4}*{CNOT}
      & \(\hat X \otimes \hat I\) & \(\hat X \otimes \hat X\) \\
      & \(\hat I \otimes \hat X\) & \(\hat I \otimes \hat X\) \\
      & \(\hat Z \otimes \hat I\) & \(\hat Z \otimes \hat I\) \\
      & \(\hat I \otimes \hat Z\) & \(\hat Z \otimes \hat Z\) \\
      \hline
    \end{tabular}
    \caption{Transformation of stabilizer generators under Clifford operations.}
    \label{tab:aaronsongottesmanprop}
  \end{table}
  
 Aaronson-Gottesman defined such a \(2n\times (2n+1)\) binary tableau matrix as:
  \[
  \left(
  \begin{matrix}
    x_{11}   & \ldots & x_{1n} & \vline & z_{11}   & \ldots & z_{1n} & \vline & r_{1}\\
    \vdots    & \ddots & \vdots & \vline & \vdots  & \ddots & \vdots & \vline & \vdots\\
    x_{n1}   &\ldots & x_{nn} & \vline & z_{n1}   &\ldots & z_{nn} & \vline & r_{n}\\
    \hline
    x_{(n+1)1}   & \ldots & x_{(n+1)n} & \vline & z_{(n+1)1}   & \ldots & z_{(n+1)n} & \vline & r_{n+1}\\
    \vdots         & \ddots & \vdots        & \vline & \vdots         & \ddots & \vdots        & \vline & \vdots\\
    x_{(2n)1}   &\ldots & x_{(2n)n}       & \vline & z_{(2n)1}   &\ldots & z_{(2n)n}    & \vline & r_{2n}\\
  \end{matrix}
  \right)
  \]  
  This matrix contains \(2n\) rows. The first n rows denote the destabilizers \(\hat g'_1\) to \(\hat g'_n\) while rows \((n+1)\) to \(2n\) represent the stabilizers \(\hat g_1\) to \(\hat g_n\). The \((n+1)\)th bit in each row denotes the phase \((-1)^{r_i}\) for each generator. We encode the \(j\)th Pauli operator in the \(i\)th row as shown in Table~\ref{tab:Paulibinaryencoding}.
  \begin{table}
    \begin{tabular}{|c|c|c|}
      \hline
      \(x_{ij}\) & \(z_{ij}\) & \(\hat P_j\)\\
      \hline
      \(0\) & \(0\) & \(\hat I_j\)\\
      \(0\) & \(1\) & \(\hat Z_j\)\\
      \(1\) & \(0\) & \(\hat X_j\)\\
      \(1\) & \(1\) & \(\hat Y_j\)\\
      \hline
    \end{tabular}
    \begin{tabular}{|c|c|}
      \hline
      \(r_i\) & phase\\
      \hline
      \(0\) & \(+1\)\\
      \(1\) & \(-1\)\\
      \hline
    \end{tabular}
    \caption{Binary representation of the Pauli operators and the Pauli group phase used in their tableau representation.}
    \label{tab:Paulibinaryencoding}
  \end{table}
  We can update the stabilizers and destabilizers as follows:
  
  \emph{Hadamard gate on qubit \(i\)} For all \(j\in\{1,2,...,2n\}\), \(r_{j} \mapsto r_{j}\oplus x_{ji}z_{ji}\), then swap \(x_{ji}\) with \(z_{ji}\).
  
  \emph{Phase gate on qubit \(i\)} For all \(j\in\{1,2,...,2n\}\), \(r_{j} \mapsto r_{j}\oplus x_{ji}z_{ji}\), \(z_{ji} \mapsto z_{ji}\oplus x_{ji}\).
  
  \emph{CNOT gate on control qubit \(i\) and target qubit \(j\)} For all \(k\in\{1,2,...,2n\}\), \(r_{k} \mapsto r_{k}\oplus x_{ki}z_{kj}(x_{kj}\oplus z_{ki}\oplus 1)\), \(x_{kj} \mapsto x_{kj}\oplus x_{ki}\), \(z_{ki} \mapsto z_{ki}\oplus z_{kj}\).

  These actions correspond to those given in Table~\ref{tab:aaronsongottesmanprop}.
  
Notice the striking similarity of these tableau transformation rules under unitary propagation to the \(\bs \Phi\) transformation rules in Section~\ref{sec:aaronsongott}. The most glaring difference is that the Aaronson-Gottesman algorithm involves updates of the vector \(\bs r\). We will discuss this and its connection to the dimension \(d=2\) of the system in Section~\ref{sec:discussion}. It is clear that these transformations also take \(O(n)\) operations each.

\subsection{Measurement}
\label{sec:meas2}
To describe the measurement part of the algorithm, we need to first define a rowsum operation in the tableau which corresponds to multiplying two Pauli operators together.

As defined in~\cite{Aaronson04}:\\
\textbf{Rowsum}: To sum row \(i\) and \(j\), first update the bits that represent operators by \(x_{ik}\oplus x_{jk}\) and
\(z_{ik}\oplus z_{jk}\) for \(k=1,\,\ldots,\,n\). To calculate the resultant phase, Aaronson and Gottesman first defined the following function:
\begin{equation}
  f(x_{ik},x_{jk},z_{ik},z_{jk})=
    \begin{cases}
      0 &\text{if } x_{ik}=z_{ik}=0,\\
      z_{jk}-x_{jk} &\text{if } x_{ik}=z_{ik}=1, \\
      z_{jk}(2x_{jk}-1) &\text{if } x_{ik}=0, z_{ik}=1, \\
      x_{jk}(1-2z_{jk}) &\text{if } x_{ik}=1, z_{ik}=0.
    \end{cases}
\end{equation}
Since each stabilizer generator is the tensor product of \(n\) single qubit Pauli operators (see Eq.~\ref{eq:generator}), they must be multiplied together to obtain the phase:
\begin{equation}
  \begin{cases}
    0 &\text{if } r_{i}+r_{j}+\sum_{k=1}^{n} f(x_{ik},x_{jk},z_{ik},z_{jk})\equiv 0~(\text{mod}~4),\\
      1 &\text{if } r_{i}+r_{j}+\sum_{k=1}^{n} f(x_{ik},x_{jk},z_{ik},z_{jk})\equiv 2~(\text{mod}~4).
  \end{cases}
\end{equation}

Having defined the rowsum function, let us now consider a measurement of \(\hat Z_i\) on qubit \(i\). For \(d=2\), Pauli group operators can only commute or anti-commute with each other. If \(\hat Z_i\) anti-commutes with one or more of the generators then the measurement is random. If \(\hat Z_i\) commutes with all of the generators then the measurement is deterministic. We consider these two cases:
  
  \textbf{Case 1:} Random Measurement

  \(\hat Z_i\) anti-commutes with one or more of the generators. If there is more than one, we can always pick a single anti-commuting generator, \(\hat g_j\), and update the rest by replacing them with their product with \(\hat g_j\) (i.e. taking the rowsum of their corresponding rows) such that they commute with \(\hat Z_i\). These updates take \(\mathcal{O}(n^2)\) operations. Finally, we only need to replace \(\hat g_j\) by \(\hat Z_i\).

  In other words, with respect to the tableau, there should exist at least one \(j\in\{n+1,n+2,...,2n\}\) such that \(x_{ji}=1\). Replacing all rows where \(x_{ki}=1\) for \(k\neq j\) with the sum of the \(j\)th and \(k\)th row (using the rowsum function) sets all \(x_{ki}=0\) for \(k\neq j\).

  Finally, we replace the \((j-n)\)th row with the \(j\)th row and update the \(j\)th row by setting \(z_{ji}=1\) and all other \(x_{jk}\)s and \(z_{jk}\)s to \(0\) for all \(k\). We output \(r_{j}=0\) or \(r_j=1\) with equal probability for the measurement result. This procedure takes \(O(n^2)\) operations because each rowsum operation takes \(O(n)\) operations and up to \(n-1\) rowsums may be necessary.
  
  \textbf{Case 2:} Deterministic Measurement

  \(\hat Z_i\) commutes with all generators. In this case, there is no \(j\in\{n+1,n+2,...,2n\}\) such that \(x_{ji}=1\) and we don't need to update any of the generators. But we do need to do some work to retrieve the measurement outcome. 

  Measurement \(\hat Z_i\) commutes with all of the stabilizers, therefore either \(+\hat Z_i\) or \(-\hat Z_i\) is a stabilizer of the state. So it must be generated by the generators. The sign \(\pm 1\) is the measurement outcome we are looking for. This means that
  \begin{equation}
    \prod_{j=1}^n \hat g_{j}^{c_j}=\pm \hat Z_i,
  \end{equation}
  where \(c_j=1\) or \(0\).
  
  For those destabilizers \(g'_k\) that satisfy
  \begin{equation}
    \label{eq:anticommwithg}
    \{\hat g'_k, \pm \hat Z_i\}=0,
  \end{equation}
  \(c_k=1\). Otherwise \(c_k=0\).
  This can be seen from
  \begin{equation}
    \{\hat g'_k, \pm \hat Z_i\}=\{\hat g'_k, \prod_{j=1}^{n} \hat g_j^{c_j}\}=\prod_{\substack{j=1\\j\ne k}}^n \hat g_j^{c_j} \{\hat g'_k,\hat g_k^{c_k}\}=0,
  \end{equation}
  where we used part (ii) of Definition~\ref{def:destabilizers} of the destabilizers and Eq.~\ref{eq:anticommwithg}.
  The last equality requires \(c_k=1\).

  Therefore, to find the deterministic measurement outcome, the stabilizers whose corresponding destabilizer anti-commutes with the measurement operation \(\hat Z_i\) must be multiplied together. Every row \((n+j)\) in the bottom half of the tableau, such that \(x_{ji}=1\) (for \(j \in \{1,\,\ldots,\,n\}\)), can be added up together and stored in a temporary register. The resultant phase \(\pm 1\) of this sum is the measurement result we are looking for.

  Checking if each destabilizer commutes or anti-commutes with \(\hat Z_i\) takes a constant number of operations. One multiplication takes \(O(n)\) operations, and there are \(O(n)\) multiplications needed. Therefore, a measurement takes \(O(n^2)\) operations overall.

\section{Discussion}
\label{sec:discussion}

As we made clear throughout Section~\ref{sec:aaronsongott}, the scaling of the number of required operations with respect to number of qubits \(n\) is exactly the same in the Aaronson-Gottesman algorithm in the Wigner algorithm presented in Section~\ref{sec:wigstabalgorithm}. The two algorithms also require the same number of dits of temporary storage for performing the deterministic measurement. Moreover, there is a correspondence between the tableau employed by Aaronson-Gottesman and the matrix \(\bs \Phi_t\) and vector \(\bs r_t\) we use. In particular, the tableau is equal to \(\left( \begin{array}{c|c} \bs \Phi_t & \bs r_t \end{array}\right)\):
\begin{widetext}
\begin{equation}
\bs \Phi_t = \left(\begin{array}{c|c}\partder{\bs p_0}{\bs p_t} & \partder{\bs p_0}{\bs q_t}\\ \hline \partder{\bs q_0}{\bs p_t} & \partder{\bs q_0}{\bs q_t}\end{array}\right) \equiv   \left(
  \begin{matrix}
    x_{11}   & \ldots & x_{1n} & \vline & z_{11}   & \ldots & z_{1n}\\
    \vdots    & \ddots & \vdots & \vline & \vdots  & \ddots & \vdots\\
    x_{n1}   &\ldots & x_{nn} & \vline & z_{n1}   &\ldots & z_{nn}\\
    \hline
    x_{(n+1)1}   & \ldots & x_{(n+1)n} & \vline & z_{(n+1)1}   & \ldots & z_{(n+1)n}\\
    \vdots         & \ddots & \vdots        & \vline & \vdots         & \ddots & \vdots\\
    x_{(2n)1}   &\ldots & x_{(2n)n}       & \vline & z_{(2n)1}   &\ldots & z_{(2n)n}\\
  \end{matrix}
  \right)
\end{equation}
\end{widetext}
and
\begin{equation}
  \bs r_t = \left(\begin{array}{c} \bs r_p \\ \hline \bs r_q \end{array}\right) \equiv \left(\begin{array}{c} r_1 \\ \vdots \\ r_n\\\hline r_{n+1} \\ \vdots \\ r_{2n} \end{array}\right).
\end{equation}
This can be seen through the following equation:
\begin{eqnarray}
\exp\left( \frac{2 \pi i}{d} \sum_{j=1}^{2n} {\Phi_t}_{n+i,j} \hat x_j \right) \Psi_t &=& \prod_{j=1}^{2n} \exp\left( \frac{2 \pi i}{d} {\Phi_t}_{n+i,j} \hat x_j \right) \Psi_t \nonumber
  \label{eq:heisenbergeqschrodinger1}\\
  &=& \exp\left(\frac{2 \pi i}{d} {r_t}_i \right) \Psi_t
\end{eqnarray}
where \(\hat{\bs x} \equiv \left( \hat{\bs p}, \hat{\bs q}\right)\). Multiplying the right hand side of the first equation and the second equation by \(\exp\left(-\frac{2 \pi i}{d} {r_t}_i \right)\), it follows that
\begin{equation}
  \label{eq:heisenbergeqschrodinger2}
  \exp\left(-\frac{2 \pi i}{d} {r_t}_i \right) \prod_{j=1}^{2n} \exp\left( \frac{2 \pi i}{d} {\Phi_t}_{n+i,j} \hat x_j \right) \Psi_t = \hat g_i \Psi_t = \Psi_t.
\end{equation}
In other words, \({r_t}_i\) specifies the phase \(\exp\left(-\frac{2 \pi i}{d} {r_t}_i \right)\) of the \(i\)th stabilizer, which is itself specified by \({\Phi_t}_{n+i,j}\) for \(j \in \{0, \ldots, 2n\}\). These are the same roles for \(\bs r\) and the tableau in the Aaronson-Gottesman tableau algorithm~\cite{Aaronson04}.

Indeed, both algorithms check the bottom half of their matrices for finite elements of \(\bs \Phi_{n+j,i}\) to determine if a measurement on the \(i\)th qudit will be random or not. They also use a very similar protocal to determine the outcome of deterministic measurements. The Wigner-based algorithm motivates these manipulations in terms of the symplectic structure of Weyl phase space and the relationship between the two Wigner functions specified by the top and bottom of \(\bs \Phi\), providing a strong physical intuition for their effects. Aaronson and Gottesman motivate these manipulations using the anti-commutation relations between the stabilizer and destabilizer generators. Also, the latter half of both the Wigner function's \(\bs r_t\) and Aaronson-Gottesman's \(\bs r\) are used to determine measurement outcomes. The only fundamental algorithmic difference between the approaches is that the Wigner-based algorithm does not require updates of \(\bs r_t\) during unitary propagation. The reason for this lies in the fact that Aaronson-Gottesman's algorithm deals with systems with \(d=2\) while the Wigner-based algorithm is restricted to odd \(d\).

In particular, for the one-qubit Clifford group gate operator \(\hat A = \{\hat P_i, \hat F_i\} \, \forall i = \{1,\ldots,n\}\), the Aaronson-Gottesman algorithm specifies that for a \(q\)- or \(p\)-state, its Wigner function evolves by:
\begin{equation}
  \Psi_x(\bs{\mathcal M}_{\hat A} \bs x).
\end{equation}
But for \(\ket r = \frac{1}{\sqrt{2}} (\ket 0 \pm i \ket 1)\), a \(Y\)-state which is diagonal to \(q\) and \(p\), its Wigner function must first be translated:
\begin{equation}
  \Psi_x\left(\bs{\mathcal M}_{\hat A} \left(\bs x + \bs \beta \right) \right),
\end{equation}
where the translation \(\bs \beta\) can be \(( 1, 0)\) or \((0, 1)\) equivalently. There is a similar state-dependence for the two-qudit CNOT gate \(\hat C_{ij}\).

This demonstrates that the Aaronson-Gottesman algorithm is state-dependent on the qubit stabilizer state it is acting on. On the other hand, our algorithm on qudit stabilizer states is state-independent. This likely is a consequence of the fact that the Weyl-Heisenberg group, which is made up of the boost and shift operators defined in Eqs.~\ref{eq:boost} and~\ref{eq:shift} that underlie the discrete Wigner formulation, are a subgroup of \(U(d)\) instead of \(SU(d)\) for \(d=2\)~\cite{Bengtsson17}. Furthermore, qubits exhibit state-independent contextuality while odd \(d\) qudits do not~\cite{Mermin93}.

\section{Example of Stabilizer Evolution}

\begin{figure}[ht]
\includegraphics[scale=1.0]{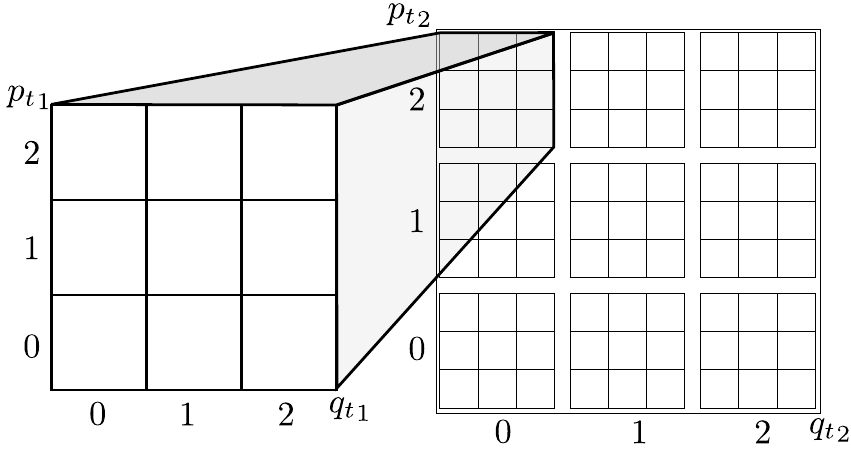}
\caption{A decomposition of the two qutrit Wigner function into nine \(3 \times 3\) girds, where each \(3 \times 3\) grid denotes the value of the Wigner function at all \({p_t}_1\) and \({q_t}_1\) for a fixed value of \({p_t}_2\) and \({q_t}_2\) denoted by the external axes. This organization is used in Fig.~\ref{fig:twoqutritquantum} below.}
\label{fig:twoqutritquantumsetup}
\end{figure}
\begin{figure}[ht]
\includegraphics[scale=1.0]{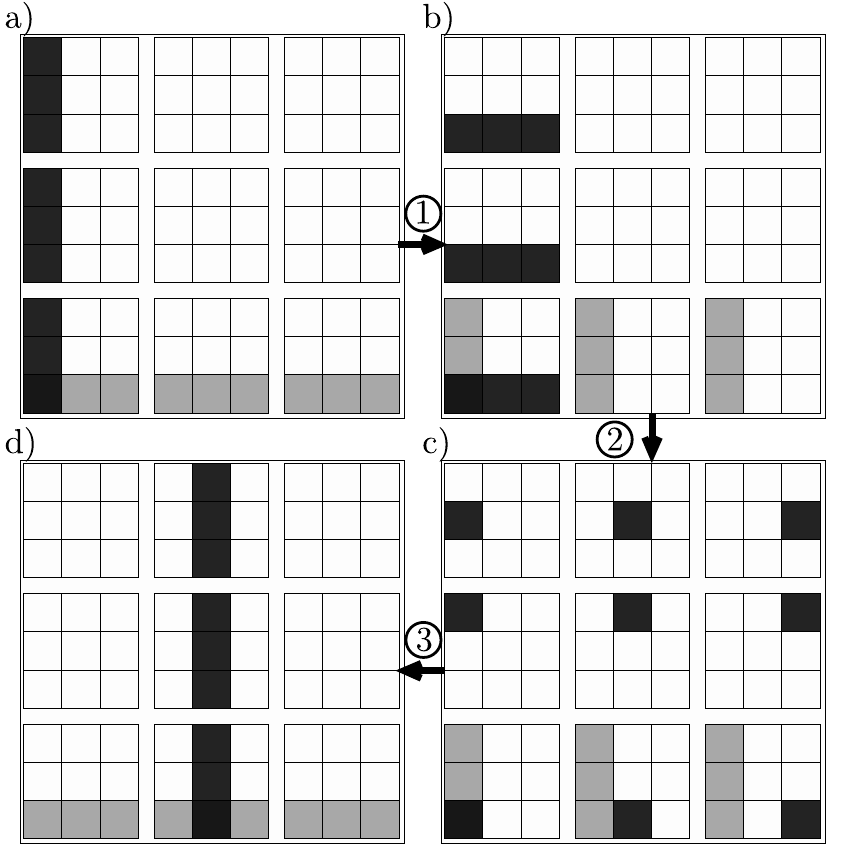}
\caption{The Wigner function of two qutrits initially prepared in a) the state \(\ket{0} \otimes \ket{0}\). (1) It is then evolved under \(\hat F_1\) to produce b) \(\frac{1}{\sqrt{3}}\left(\ket 0 + \ket 1 + \ket 2\right) \otimes \ket 0\). (2) This state is subsequently evolved under \(\hat C_{12}\) producing c) the Bell state \(\frac{1}{\sqrt{3}}\left(\ket{00} + \ket{11} + \ket{22}\right)\). (3) Qudit \(1\) is then measured producing the random outcome \(1\), which collapses qudit \(2\) into the same state, so that d) \(\ket 1 \otimes \ket 1\) results. The black color indicates the Wigner function specified by the lowest \(n\) rows of \(\delta_{\bs \Phi_t \bs x, \bs r_t}\), and the gray color indicates the Wigner function specified by the highest \(n\) rows (\(\bs q_0(\bs p_t, \bs q_t)\) and \(\bs p_0(\bs p_t, \bs q_t)\), respectively). The evolution and algorithmic implementation are explained in the text.}
\label{fig:twoqutritquantum}
\end{figure}

As a demonstration of what stabilizer state propagation looks like in the Wigner formalism, we proceed to go through an example of Bell state preparation and measurement starting from the state \(\ket 0 \otimes \ket 0\). The prepared Wigner function is illustrated in Fig.~\ref{fig:twoqutritquantum} with the color black, and the Wigner function represented by setting \(\bs \Phi_0 = \left(\begin{array}{cc}1 & 0\\ 0& 0 \end{array}\right)\) (i.e. considering the top \(n\) rows of our function as a separate Wigner function) is illustrated with the color gray.

We begin with
\begin{eqnarray}
  &&\Psi_x(\bs x) = \\
  &&\delta_{\left(\begin{array}{cccc} 1 & 0 & 0 & 0\\ 0 & 1 & 0 & 0\\ 0 & 0 & 1 & 0\\ 0 & 0 & 0 & 1\end{array}\right) \left(\begin{array}{c}{p_t}_1\\{p_t}_2\\{q_t}_1\\{q_t}_2\end{array}\right),\left(\begin{array}{c}0\\0\\0\\0\end{array}\right)} = \delta_{\left(\begin{array}{c}{p_t}_1\\{p_t}_2\\{q_t}_1\\{q_t}_2\end{array}\right),\left(\begin{array}{c}0\\0\\0\\0\end{array}\right)}, \nonumber
\end{eqnarray}
denoting an initially prepared state of \(\ket{0} \otimes \ket{0}\). This is clear in Figure~\ref{fig:twoqutritquantum}a by the black band that lies along all Weyl phase space points with \({q_t}_1=0\) and \({q_t}_2=0\). On the other hand, the gray manifold is perpendicular to the black one, and lies along Weyl phase space points with \({p_t}_1=0\) and \({p_t}_2=0\).

Acting on this state with \(\hat F_1\) produces \(\frac{1}{\sqrt{3}} \left(e^{\frac{2 \pi i}{3} 0 \times 0} \ket 0 + e^{\frac{2 \pi i}{3} 1 \times 0} \ket 1 + e^{\frac{2 \pi i}{3} 2 \times 0} \ket 2 \right) \otimes \ket{0}\). Applying the algorithm specified at the end of Section~\ref{sec:propalgorithm}, we find:
\begin{eqnarray}
  &&\Psi_x(\bs x) = \\
  &&\delta_{\left(\begin{array}{cccc} 0 & 0 & -1 & 0\\ 0 & 1 & 0 & 0\\ 1 & 0 & 0 & 0\\ 0 & 0 & 0 & 1\end{array}\right) \left(\begin{array}{c}{p_t}_1\\{p_t}_2\\{q_t}_1\\{q_t}_2\end{array}\right),\left(\begin{array}{c}0\\0\\0\\0\end{array}\right)} = \delta_{\left(\begin{array}{c}-{q_t}_1\\{p_t}_2\\{p_t}_1\\{q_t}_2\end{array}\right),\left(\begin{array}{c}0\\0\\0\\0\end{array}\right)}. \nonumber
\end{eqnarray}
Thus, the momentum of qudit \(1\) is now determined and is \(0\) while the second qudit is unchanged. This can be seen in Fig.~\ref{fig:twoqutritquantum}b, where the \({q_t}_2\) values of the non-zero Weyl phase space points are the same, while the state has rotated by \(-\pi/2\) in \(({p_t}_1,{q_t}_1)\)-space. A similar transformation has occured for the perpendicular gray manifold.

Acting next with \(\hat C_{12}\) produces the Bell state \(\frac{1}{\sqrt{3}}\left(\ket{00} + \ket{11} + \ket{22}\right)\), which is represented by the following Wigner function:
\begin{eqnarray}
  &&\Psi_x(\bs x) = \\
  &&\delta_{\left(\begin{array}{cccc} 0 & 0 & -1 & 0\\ 0 & 1 & 0 & 0\\ 1 & 1 & 0 & 0\\ 0 & 0 & -1 & 1\end{array}\right) \left(\begin{array}{c}{p_t}_1\\{p_t}_2\\{q_t}_1\\{q_t}_2\end{array}\right),\left(\begin{array}{c}0\\0\\0\\0\end{array}\right)} = \delta_{\left(\begin{array}{c}-{q_t}_1\\{p_t}_2\\{p_t}_1+{p_t}_2\\-{q_t}_1+{q_t}_2\end{array}\right),\left(\begin{array}{c}0\\0\\0\\0\end{array}\right)}. \nonumber
\end{eqnarray}
The entanglement between the two qudits is evident in both of their dependence on each other's momenta and positions, \({p_t}_1 = -{p_t}_2\) and \({q_t}_1 = {q_t}_2\), specified by the last two rows. Fig.~\ref{fig:twoqutritquantum}c shows that the state is still representable as lines in Weyl phase space, except they now traverse through the different planes of \(({q_t}_1,{p_t}_1)\) associated with each value of \(({q_t}_2,{p_t}_2)\). However, if you consider the left column in Fig.~\ref{fig:twoqutritquantum}c corresponding to \({q_t}_2=0\), you can see that the only black Weyl phase points are at \({q_t}_1=0\). Similarly, the middle column corresponding to \({q_t}_2=1\) shows that \({q_t}_1=1\), and the right column corresponding to \({q_t}_2=2\) shows that \({q_t}_1=2\) too, confirming that \(\ket Phi = \frac{1}{\sqrt{3}}\left(\ket{00} + \ket{11} + \ket{22}\right)\). Thus, the entanglement of the two qudits positions is clearly evident in this figure of the Wigner function.

We then proceed to measure qudit \(1\). Since the lower two equations involve \({p_t}_1\), we know that this is a random measurement. Let us pick the outcome to be \(1\) and set the third row as such, replacing the first row with the old third row. This collapses qudit \(2\) into the same state:
\begin{eqnarray}
  &&\Psi_x(\bs x) =\\
  &&\delta_{\left(\begin{array}{cccc} 1 & 1 & 0 & 0\\ 0 & 1 & 0 & 0\\ 0 & 0 & 1 & 0\\ 0 & 0 & -1 & 1\end{array}\right) \left(\begin{array}{c}{p_t}_1\\{p_t}_2\\{q_t}_1\\{q_t}_2\end{array}\right),\left(\begin{array}{c}0\\0\\1\\0\end{array}\right)} = \delta_{\left(\begin{array}{c}{p_t}_1+{p_t}_2\\{p_t}_2\\{q_t}_1\\-{q_t}_1+{q_t}_2\end{array}\right),\left(\begin{array}{c}0\\0\\1\\0\end{array}\right)}. \nonumber
\end{eqnarray}
The lower two rows show that now \({q_t}_1 = 1\), as we chose, and \({q_t}_2 = {q_t}_1 = 1\). The collapse of qudit \(2\) into \(\ket 1\) can also been seen in Fig.~\ref{fig:twoqutritquantum}c by the fact that \({q_t}_1 = 1\) only in the \(3\times3\) grids that correspond to \({q_t}_2 = 1\) too.

Finally, the fact that the measurement of \({q_t}_2\) is deterministic can be seen in that \({p_t}_2\) is not present in the last two rows of \(\bs \Phi_t\). Furthermore, it is clear since the first row has a coefficient of \(1\) in front of \({p_t}_1\), that the corresponding third row must be added with weight \(1\) to the fourth row to obtain this deterministic measurement outcome of \({q_t}_2 = 1\). This can also be seen in Fig.~\ref{fig:twoqutritquantum} by finding the projection of \({p_0}_1\) onto \({p_t}_2\), which are shown by the gray manifolds in panels a) and d) respectively. They are collinear and so the projection is equal to \(1\). (Perpendicular manifolds corresponds to a projection of \(0\), and those that lie \(\pi/4\) diagonally with respect to each other have a projection equal to \(2\) in this discrete geometry.)

\section{Conclusion}
\label{sec:conc}
In summary, we introduced an algorithm that efficiently simulates stabilizer state evolution under Clifford gates and measurements in the \(\hat Z\) Pauli basis for odd \(d\) qudits. We accomplished this by relying on the phase-space perspective of stabilizer states as discrete Gaussians and Clifford operators as having underlying harmonic Hamiltonians. We showed the equivalence of our algorithm, through Eqs.~\ref{eq:heisenbergeqschrodinger1} and \ref{eq:heisenbergeqschrodinger2}, to the well-known Aaronson-Gottesman tableau algorithm~\cite{Aaronson04}, revealing that Aaronson-Gottesman's tableau corresponds to a discrete Wigner function. As a consequence, we revealed the physically intuitive phase space perspective of Aaronson-Gottesman's algorithm, as well as its extension to higher odd \(d\). 

This work illustrates that no efficiency advantage is gained by using the Heisenberg representation for stabilizer propagation. Eq.~\ref{eq:heisenbergeqschrodinger2} indicates that the Heisenberg representation is equivalent to the Schr\"odinger representation in this context; evolving the operators is just as efficient as evolving the states, as perhaps expected.

Lastly, the correspondence between the Wigner-based algorithm and the Aaronson-Gottesman tableau algorithm may point the direction on how to resolve the long-standing issue of describing the Wigner-Weyl-Moyal and center-chord formalism for \(d=2\) systems. We have shown that the Aaronson-Gottesman algorithm is essentially a \(d=2\) treatment of the Wigner approach. The salient ingredient appears to be the state-dependence of this evolution, and likely is related to the fact that the Weyl-Heisenberg group is not in \(SU(d)\) for \(d=2\) and that qubits exhibit state-independent contextuality, which odd \(d\) qudits do not. Exploring the details of this state-dependence is a promising subject of future study.

\bibliography{biblio}{}
\bibliographystyle{unsrt}

\end{document}